\newcommand*{\circled}[1]{\lower.7ex\hbox{\tikz\draw (0pt, 0pt)%
    circle (.5em) node {\makebox[1em][c]{\small #1}};}}
\theoremstyle{definition}
\newtheorem{definition}{Definition}
\newtheorem{example}{Example}
\newtheorem{theorem}{Theorem}
\newtheorem*{proof}{Proof}
\theoremstyle{remark}
\definecolor{mygreen}{rgb}{0,0.6,0}
\definecolor{mygray}{rgb}{0.5,0.5,0.5}
\definecolor{mymauve}{rgb}{0.58,0,0.82}
\newcolumntype{I}{!{\vrule width 1.2pt}}
\newlength\savedwidth
\newlength\savewidth
\begin{document}
\begin{CJK*}{UTF8}{gbsn}

\title{MEC: An Open-source Fine-grained \textbf{M}apping \textbf{E}quivalence \textbf{C}hecking Tool for FPGA}

\author{
\IEEEauthorblockN{
Liwei~Ni$^{1,2}$\IEEEauthorrefmark{1}, Zonglin~Yang${^{5,1}}$, Jiaxi~Zhang${^{3}}$, Changhong~Feng${^{4}}$, Jianhua~Liu${^{4}}$, \\ Guojie~Luo${^{3,1}}$, Huawei~Li${^{2,1}}$, Biwei~Xie$^{2,1}$ and~Xingquan~Li$^{1,\text{\Letter}}$
}
\IEEEauthorblockA{
$^1$Peng Cheng Laboratory, Shenzhen, China
}
\IEEEauthorblockA{
$^2$Institute of Computing Technology, Chinese Academy of Sciences, Beijing, China
}
\IEEEauthorblockA{
$^3$Center for Energy-Efficient Computing and Applications, Peking University, Beijing, China
}

\IEEEauthorblockA{
$^4$Shanghai Anlogic Infotech Co., Ltd, Shanghai, China
}

\IEEEauthorblockA{
$^5$Shenzhen University, Shenzhen, China
}

\IEEEauthorblockA{
Email: \IEEEauthorrefmark{1}nlwmode@gmail.com, $^{\text{\Letter}}$lixq01@pcl.ac.cn
}}

\maketitle

\begin{abstract}
% The Formal Equivalence checking process is an essential part of EDA, commonly used to formally prove that two representations of a digital circuit design exhibit exactly the same behavior. In the actual development process, we found that when developing the technology mapping step, the miter-based equivalence checking usually takes a considerable amount of time, and as the scale of the circuit grows, the problem gradually becomes unsolvable. 

% Aiming at the problem above for the technology mapping step, we propose an equivalence-checking algorithm based on global connectivity and local verification. And we also compare our algorithm with the miter-based method and Synopsys's formality tool, the experiments depict that our algorithm performs a large acceleration and can check the equivalence of hundreds of thousands of circuits within a few seconds. {\color{red}to be modified}

Technology mapping is an essential step in EDA flow. 
However, the function of the circuit may be changed after technology mapping, and equivalence checking (EC) based verification is highly necessary. 
The traditional EC method has significant time and resource constraints, making it only feasible to carry out at a coarse-grained level. 
To make it efficient for technology mapping, we propose a fine-grained method called MEC, which leverages a combination of two approaches to significantly reduce the time cost of verification. 
The local block verification approach performs fast verification and the global graph cover approach guarantees correctness. 
% The proposed method is rigorously tested and compared to the traditional miter-based method and a commercial EC tool. 
% The results of the experiment show that the MEC technique offers a substantial improvement in speed, with an average speed-up of 63,614 times compared to miter and 1,227 times compared to the commercial tool.
The proposed method is rigorously tested and compared to three EC tools, and the results show that MEC technique offers a substantial improvement in speed. 
MEC not only offers a faster and more efficient way of performing EC on technology mapping but also opens up new opportunities for more fine-grained verification in the future.

\end{abstract}

\begin{IEEEkeywords}
Technology mapping, equivalence checking, Fine-grained, FPGA
\end{IEEEkeywords}

\section{Introduction}

%\added{Focus on FPGA will be better. ASIC EC commercial tools are relatively mature.}   \\

% 1. 
% 明确 coarse-grained verification和fine-grained verification
% 在标题中凸显fine-grained，提升文章的level
% 2. 
% theory 添加 definition，拓宽理论
% 3. 
% framework 添加一个围绕MEC的流程图，更加强调MEC的作用

Technology mapping, an essential process in logic synthesis, transforms a technology-independent logic network into a network with logic blocks. In FPGA technology mapping\cite{techmap-c}, the logic network is transformed into a $k$-LUT network, where each $k$-LUT block can represent any Boolean function with an input size less than or equal to k, due to the reconfigurability of the LUTs which are similar to SRAMs. Making the verification process of technology mapping more cost-effective and having a debug tool available are important considerations to improve the efficiency and accuracy of the process.

% The mapping result can also be affected by optimizations performed before the technology mapping. Prior to technology mapping, multiple technology-independent optimization steps can be taken to obtain a network at lower cost. However, a bias still exists, meaning that the lower cost network may not necessarily result in the best mapping outcome. To address this issue, Chatterjee et al.\cite{bias_reduction-c} proposed a lossless synthesis technique to merge internal optimized networks that have the lowest depth or area, as well as networks with a trade-off between depth and area, in order to reduce the bias during the technology mapping step. 
%The choice network is also proposed, where function-equivalent and function-complement nodes in the merged graph are used to store the local structure of the selected internal networks.

Equivalence Checking is also an important step of Electronic Design Automation~(EDA) flow and is commonly used to prove that two representations of a digital circuit design exhibit exactly the same behavior. It can also guarantee the correctness of technology mapping. As illustrated in Fig. \ref{fig:formal}, EC is usually performed after each step to ensure that the behavior has not changed during the current step. And the coarse-grained EC directly performs verification on the output of two circuits. With the growth in the size of Very Large Scale Integrated (VLSI) Circuits, the time spent on the verification step has increased to almost $70\%$ \cite{pdv-book} in IC-flow.

\begin{figure}[tbp]
    \centering
    \includegraphics[scale=0.36]{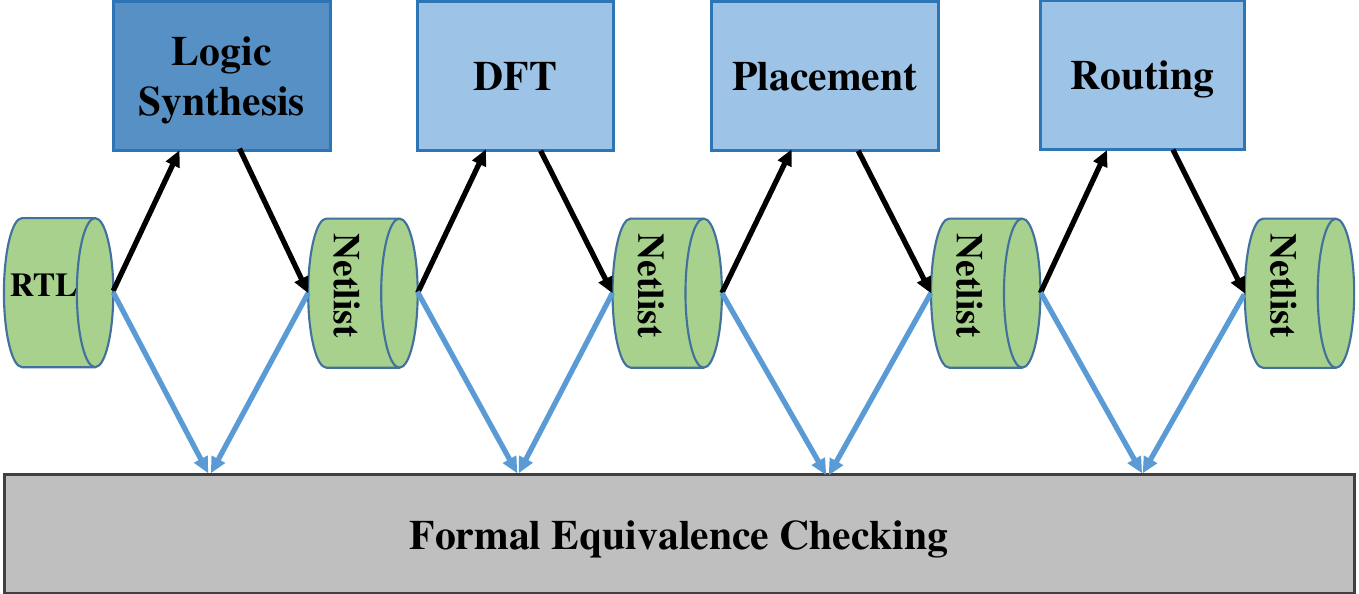}
    \caption{Verification for EDA flow.}
    \label{fig:formal}
\end{figure}

Reducing the time of the EDA flow can be achieved by making the verification process more efficient, and there have been numerous efforts in this direction.
Goldberg et al. \cite{sat_ec-c} introduced SAT to solve the equivalence checking problem and showed that it is a more robust and flexible engine for Boolean reasoning than BDD-based methods. And this led to the widespread use of SAT-based equivalence checking. 
Cook showed   that the SAT problem is NP-complete  in \cite{sat_cook-c}, triggering further research aimed at reducing the complexity of the problem for SAT solvers. This also means that cost time can be reduced by reducing the size of the circuit. Alan et al.~\cite{abc_cec-c} mainly reduced the cost by using more intelligent simulation and introducing SAT-sweep to find function-equivalent node-pairs to reduce the solution space, further enhancing the SAT-based method. 
There are also other approaches to speed up verification. 
Feng et al.~\cite{points_insert-c} reduced the checking scale by early inserting cutpoints into the RTL, allowing for verification through these cutpoints, but this approach carries the risk of changing the design.

It is possible to conduct fine-grained verification if the verification method can be made more efficient. This paper focuses on the equivalence checking of FPGA technology mapping in logic synthesis. Inspired by~\cite{lut_sat-c}, which applies FPGA technology mapping to speed up SAT solver, we can make a fast verification by incrementally checking the equivalence of the mapping result network and its original network in topological order. 

Our contributions can be divided into three parts:
\begin{itemize}
    \item We prove that the verification on technology mapping can be based on local-block-verification and global-graph-cover, and propose a highly efficient method for EC on technology mapping.
    % \item We propose a very fast equivalence checking method for technology mapping, which combines local-block-verification and global-graph-cover to accelerate the verification.
    \item We release a debug tool for technology mapping, \textbf{MEC}\footnote{https://github.com/nlwmode/MEC.}, and it can also expose the detailed equivalence information for the local block to make a quick debug.
    \item We compare MEC with three EC tools, the results of the experiment show that MEC offers a substantial improvement in speed.
\end{itemize}

The rest of this paper is organized as follows:
Section II provides background information on technology mapping and SAT-based equivalence checking.
Section III gives an illustration of the details of our proposed tool, MEC.
Section IV  shows the framework of our work.
Section V  presents the results of our experiments on equivalence checking during technology mapping.
Finally,  conclusion and future work are summarized in Section VI.

\section{Preliminaries}
% \added[comment={因为第三章其实也是背景介绍，这一章改用preliminaries会好点}]{}
% the basic for DAG and technology mapping.

% a figure for cut-based technology mapping

% 1. DAG.

% 2. choice graph.

% 3. k-feasible cut.

% 4. graph cover. this algorithm will ensure each part of subject graph will be covered by a logic block.

% the basic for equivalence checking.

% a figure for sat-based equivalence checking

In this section, we consider the verification on FPGA technology mapping. And we will introduce some of the basics we might need later.

A Boolean Network is a \textit{directed acyclic graph} (DAG) which the nodes corresponding to logic gates and the edges corresponding to wires between gates. 
A node $n$ has zero or more \textit{fanins} and \textit{fanouts}, fanins mean the nodes that flowed in and fanouts mean the nodes that flowed out. The \textit{primary inputs}~(PIs) are nodes without fanins and the \textit{primary outputs}~(POs) are nodes without fanouts. A \textit{fanin/fanout cone} of node $n$ is the subset nodes of the network reachable through the fanin/fanout edges from $n$.

As for choice-related computation, we need to extend the traditional DAG with additional properties of displayed function-equivalent or function-complementary edges between nodes. The function-equivalent edges refer to the bi-directional solid arrow line and the function-complementary edges refer to the bi-directional dotted arrow lines.
A \textit{representative node} of a network is the node without any displayed function-equivalent or function-complement edges in a network. 
The \textit{choice class} is the set of nodes that are displayed function-equivalent or function-complementary in a network. And only one representative node is needed for this class to connect all its fanouts. 
The \textit{choice node} is the node that is not the representative node in a choice class.
The \textit{regular network} is the network that all the nodes are representative nodes, and it does not have any choice class.
The \textit{choice network} is the network with some choice classes. In a choice network, nodes in a choice class are connected by a linked list and are connected by a bi-directional solid arrow if the choice node is function-equivalent to its representative node. If not, the arrow is bi-directional dotted.

For cut-based technology mapping, a \textit{cut} $c$ of node $n$ is a subset of nodes of transitive fanin cone that each path from a PI to $n$ passes through at least one node of $c$, in short, $n$ is the root, and $c$ are the leaves. A \textit{trivial cut} of node $n$ is the node itself. A \textit{non-trivial cut} includes all the nodes that are found on the paths from the root to leaves. A $k$-\textit{bounded network} is a network where the number of fanins of any node that does not exceed $k$, and also a cut is $k$-\textit{feasible} if the number of nodes in it does not exceed $k$.

A \textit{mapping} of a network means that each non-PI node assigned a $k$-feasible cut also called \textit{representative cut}. So we can easily get the mapping result for each non-PI node's representative cut from POs to PIs incrementally, and the reached nodes are used, the covered nodes are hidden in a cut as an internal part of logic block.

A \textit{miter} $m$ of two Boolean networks $f$ and $g$, $f$ and $g$ have the same PIs and POs, is the merged network that adding XOR gates to the same corresponding outputs for $f$ and $g$, then adding a big OR gate for these XOR gates. It is easy to do equivalence checking by the miter network.

\section{Traditional technology mapping and verification}
In this section, we will outline the basic flow for FPGA technology mapping and provide details on SAT-based equivalence checking.

\subsection{Traditional FPGA Technology Mapping}
Algorithm \ref{algorithm:map} depicts the traditional FPGA technology mapping flow as implemented in ABC \cite{abc-tool}.  In the following subsections, we will delve into the details of each step of the algorithm process. \\

\begin{algorithm}
\small
\caption{Traditional FPGA technology mapping}
\label{algorithm:map}
\begin{algorithmic}[1]
\Require Network $G$, cut size $k$
\Ensure Mapping result $mr$

\Statex /* merge the internal optimized networks to reduce bias */ 
\State \textit{CG} $\leftarrow$ compute\_choice\_network($G$)
\Statex /* compute all $k$-feasible cuts at each node */ 
\State cut\_enumeration(\textit{CG}, $k$)
\Statex /* compute min-depth representative cut for each node */
\State depth\_oriented\_mapping(\textit{CG}, $k$)
\Statex /* update representative cut at each node to optimize area */
\State area\_recovery(\textit{CG}, $k$)
\Statex /* compute the set of nodes used in the final mapping */
\State $mr$ $\leftarrow$ derive\_final\_mapping(\textit{CG}, $k$)
\State \Return $mr$
\end{algorithmic}
\end{algorithm}

% \begin{definition}
% \label{def-repr_node}
% (\textit{representative node}). The \textit{representative node} is the node without any displayed function-equivalent or function-complement nodes in a network.
% \end{definition}

% \begin{definition}
% \label{def-choice_class}
% (\textit{choice class}). The \textit{choice class} is the set of nodes that are displayed function-equivalent or function-complement nodes in a network. And Only one representative node is needed for this class to connect all its fanouts. 
% \end{definition}

% \begin{definition}
% \label{def-choice_node}
% (\textit{choice node}). The \textit{choice node} is the node that is not the representative node in a choice class. 
% \end{definition}

% \begin{definition}
% \label{def-regulary_network}
% (\textit{regular network}). The \textit{regular network} is the network that all the nodes are representative node, and it does not has any choice class. 
% \end{definition}

% \begin{definition}
% \label{def-choice_network}
% (\textit{choice network}). The \textit{choice network} is the network that with some choice classes. In a choice class, nodes are connected by a linked list and are connected by a bi-directional solid arrow if the choice node is function-equivalent to its representative node. If not, the arrow is bi-directional dotted.
% \end{definition}

\subsubsection{Choice network generation}
\ \newline 
\indent Due to the bias\cite{bias_reduction-c} of technology-independent optimizations, meaning that the lowest-cost network may not obtain the best mapping result, we can reduce the bias by merging the internal optimized networks into a choice network that retains the local structure of each selected network. In this way, we can obtain a better mapping result compared to each individual network.

% Let $S$ be the set of selected internal networks. The first step is to identify and compute the choice class $FN$ through simulation and sat-sweep. Next, we can generate the choice network based on $S$ and $FN$ by following topological order starting from the PIs and proceeding towards the POs. The representative node for each choice class can be set as the first node generated in the choice network, which also has the smallest topological ID.

% For example, as shown in Fig. \ref{fig:map}(a), there is a choice class \{10, 11\}. Node 10 is the representative node (represented in light blue, while choice nodes are in light orange), and node 11 has a complementary function to node 10. Node 10 takes over the fanout cone of both itself and node 11, so the structure of node 11's fanin cone can be accessed by its fanouts through node 10.

\subsubsection{Cut enumeration}
\ \newline 
\indent Cut enumeration \cite{cut_rank-c} serves as the foundation for cut-based optimization and technology mapping. Given two cut sets, $CS_1$ and $CS_2$, the merge operation of the cut sets, $CS_1 \diamond CS_2$, is defined as:
$$
CS_1 \diamond CS_2 = \{u \cup v | u \in CS_1, v \in CS_2\}
\eqno{(1)}
$$

As for technology mapping, we also need to add one constraint to equation (1): $\{|u\cup v| \leq k\}$ which means that the new merged cut's size is no more than k. And we can define the equation to compute each node's $k$-feasible cut: for convenience, let $\Phi(n)$ be the set of $k$-feasible cuts for two-input node $n$, $n_1$ and $n_2$ denote to $n$'s fanins:  

$$
\Phi(n) = \left\{
\begin{aligned}
\{\{n\}\} & :n\in PIs \\
\{\{n\}\} \cup \Phi(n_1) \diamond \Phi(n_2)  & :otherwise
\end{aligned}
\right \}.
\eqno{(2)}
$$

To begin with, each node will be assigned a trivial cut consisting of itself. 
Then, the cut set for each node will be generated incrementally in topological order by merging the cut sets of its children, according to equation (2).
To balance the trade-off between memory, time, and mapping result, the priority cut algorithm \cite{priority_cuts07-c} is usually adopted, which only saves better cut sets according to some sorting criteria for optimization or technology mapping.

% The truth table of a cut is also important for logic-related operations. 
% To compute the truth table for a k-cut, we need to first initialize the truth table for each leaf in the cut in the appropriate order. Then, the truth table for the covered nodes is computed by simulating the logic of each gate until the root is reached.

% For example, Fig. \ref{fig:map}(b) shows the results of 4-feasible cut enumeration, meaning that merged cuts with a size over 4 will be omitted. Each node in the network is assigned a trivial cut first. Then, each node's cut set is generated by merging the cut sets of its directly connected fanin nodes in topological order. In the case of the choice class{10, 11}, a part of node 11's cuts can be given to node 10. As a result, the light orange part can be seen to reach the fanout cone of node 10, meaning that the local structure of node 11 can be visited by its fanout cone.

% As for truth table computation, for example, the truth table of node 11's cut {2,3,7} can be computed as follows: (1) initialization: nodes \{2,3,7\} are assigned the truth tables \{00001111, 00110011, 01010101\}; (2) Computation: node 6's truth table can be computed by  ``$node2\ \&\ node3$", resulting in the truth table \{00000011\}. Similarly, the root node 11's truth table can be computed by ``$node6\ \&\ {\sim}node7$", ``{$\sim$}" means the edge from node7 to node11 is an inverter edge and the function of node7 need be negated, resulting in the truth table \{00000010\}. Therefore, the truth table of the cut \{2,3,7\} of node 11 is \{00000010\}.

\subsubsection{Derive mapping result}
\ \newline 
\indent Depth-oriented mapping provides an initial representative cut assignment for each node to achieve a depth-optimal mapping result. 
However, there is still room for improvement in terms of area. 
The area recovery step involves finding a more area-efficient cut to be used as the representative cut, while maintaining the constraint of depth. 
% The representative cut can also be selected based on various evaluation criteria, such as depth, covered size and etc.

After the steps above are completed, we can extract the mapping result by traversing the network in reverse topological order and collecting the representative cuts of the used nodes. This step ensures that the entire network is covered. It should be noted that trivial cuts can not be representative cuts for internal nodes, as we must ensure full coverage of the network. 
% The result of this process is a block network, which can then undergo further technology-specific optimization or be output as a netlist for the subsequent PnR(placement and routing) process.

% Fig. \ref{fig:map}(c) shows the mapping result network when we derive the mapping from POs to PIs. Node 12's representative cut is \{1,2,3,7\} and node 7's representative cut is \{4,5\}, and these cuts can be realized by using a 4-LUT and a 2-LUT respectively.

\subsection{SAT-based Equivalence Checking}

Fig.\ref{fig:ec} illustrates the scheme for SAT-based equivalence checking. Due to the property of XOR gates, the equivalence of $f$ and $g$ can easily be checked by checking the satisfiability of the miter network. In detail, the process involves first transforming the miter network into CNF form using the Tseitin transformation \cite{tseytin-book} and then checking it with SAT solver.

\vspace{3pt}
\begin{figure}[tbp]
    \centering
    \includegraphics[scale=0.25]{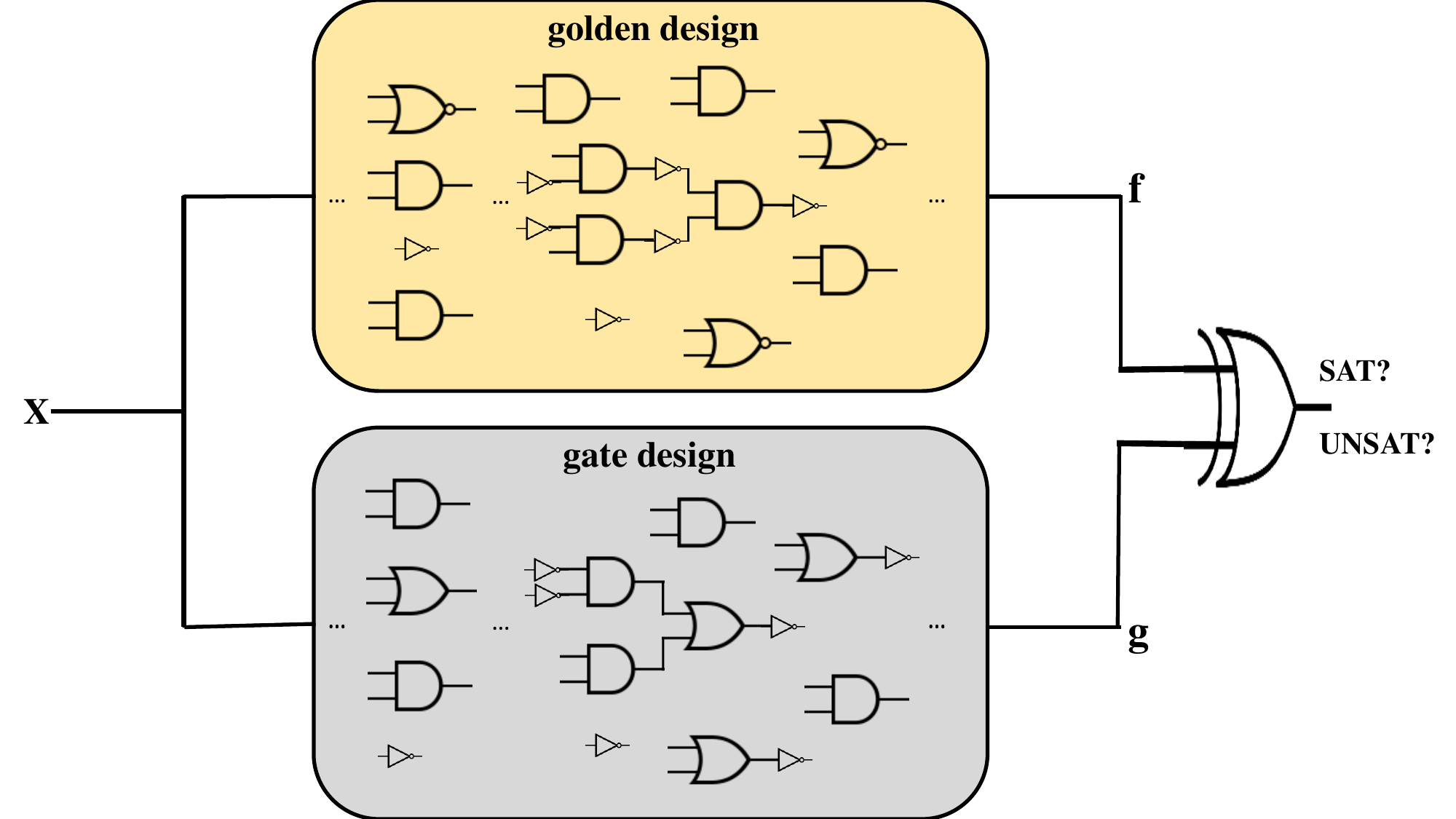}
    \caption{The miter network for $f$ and $g$.}
    \label{fig:ec}
\end{figure}

We can say that $f$ and $g$ are equivalent only if the miter network is UNSAT, meaning that $f$ and $g$ have the same output for any input signals. On the other hand, if the miter network is SAT, it means that there are some input signals that result in different outputs for $f$ and $g$, and therefore $f$ and $g$ are non-equivalent.

% The output of technology mapping must be accurate as it plays a crucial role in the EDA flow. Technology mapping is responsible for producing an optimized netlist for the subsequent design steps, and equivalence checking verifies that the transformed netlist has the same behavior as the original design. Thus, it is crucial to guarantee the correctness of the technology mapping output to ensure the overall success of the EDA flow.

% Technology mapping and equivalence checking are essential steps for EDA flow. Technology mapping ensure to output a better netlist for the following steps and equivalence checking ensure the same behaviour between netlist and the original design. So it is also important that we need also to ensure the output of technology mapping is correct.

\section{Verification on technology mapping}

\begin{figure*}[tbp]
    \centering
    \includegraphics[scale=0.31]{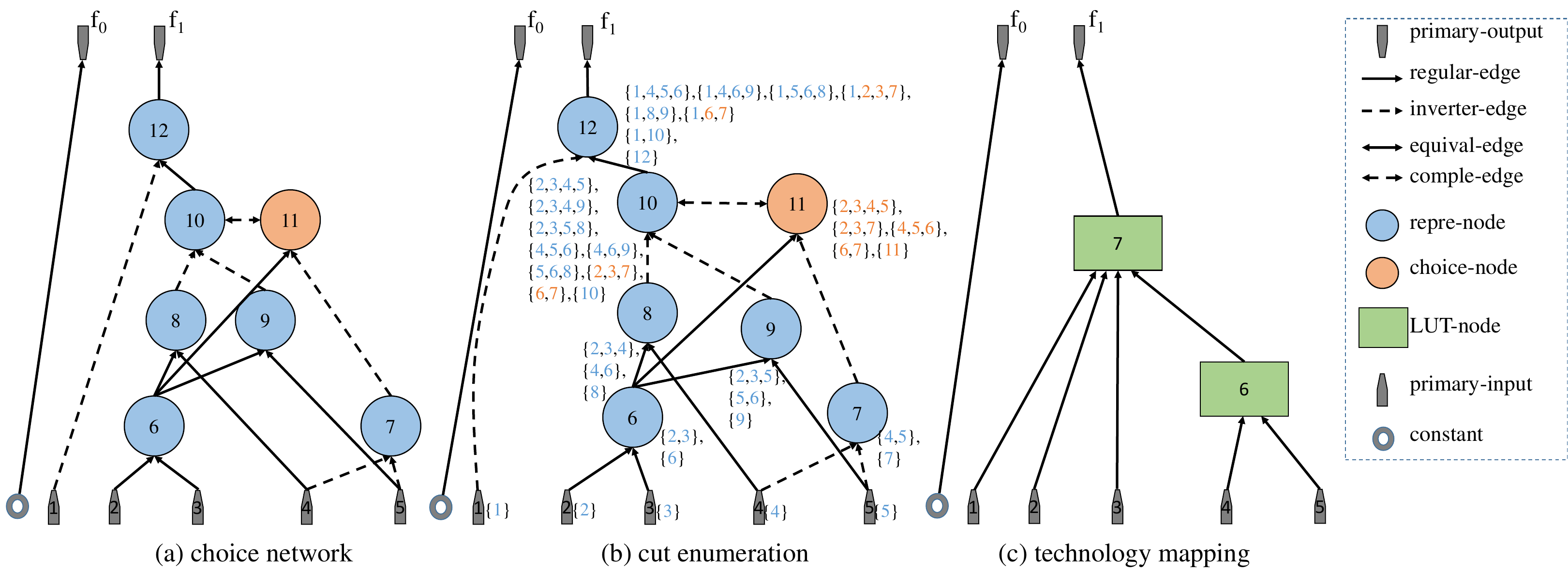}
    \caption{The technology mapping on choice network.}
    \label{fig:map}
\end{figure*}

In this section, we will illustrate the details of our proposed algorithm for verification on FPGA technology mapping, MEC. We will also prove that the results obtained from our algorithm are equivalent to those obtained from the miter network.

Our core idea is to reduce the size of the problem, and we can optimize the checking problem by combining local block verification and global graph cover. Miter-based equivalence checking for the two networks is time-consuming for technology mapping and therefore not practical for verification here. However, our proposed algorithm, logic block-based equivalence checking, can produce the same results in much less time and also provides local structure information for debugging technology mapping.
 
\subsection{Equivalence Checking on Technology Mapping}

Algorithm \ref{algorithm:ec} displays the complete flow of our proposed algorithm. We emphasize that our proposed algorithm can be adopted to combinational network and sequential network. Theorem \ref{thm-ec} also confirms the correctness of our method's checking results. 
% For convenience, \textit{OG} refers to the original network and \textit{MG} refers to the mapping network for \textit{OG}. 

\begin{definition}
\label{def-INS}
(\textit{greatest node set}~(\textit{GNS})). Let $N$ be a set of nodes of original network (\textit{OG}), and $\preccurlyeq$ be the partial ordering relation referred to the directed edge. If $(N,~\preccurlyeq)$ is a partial ordering set with only a greatest element $ge\in N$. We call above greatest element $ge$ as greatest node, and all greatest nodes compose greatest node set (\textit{GNS}).
\end{definition}

\begin{theorem}
\label{thm-ec}
A mapping network (\textit{MG}) equivalents to its original network (\textit{OG}) if each logic block in \textit{MG} equivalents to its covered network.
\end{theorem}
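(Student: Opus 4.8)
The plan is to prove the statement by structural induction in topological order over the nodes common to both $MG$ and $OG$. First I would fix notation. Since a mapping assigns to each non-PI node a $k$-feasible representative cut and the derive-mapping step guarantees the whole network is covered, the nodes appearing in $MG$ are exactly the PIs together with the root nodes of the logic blocks (the ``used'' nodes), and every fanin edge in $MG$ into a block $b$ connects a PI, or the root of some block $b'$, that is a leaf of $b$'s representative cut. For a node $n$ of $OG$ write $f_n^{OG}$ for the Boolean function it computes in terms of the PIs, and for a used node $n$ write $f_n^{MG}$ for the function the corresponding node computes in $MG$. The goal is $f_n^{OG}=f_n^{MG}$ for every used node $n$, which at the POs yields $MG\equiv OG$, equivalently that the miter network of Section III is UNSAT.

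The key structural fact I would isolate first is a consequence of the defining property of a cut: every path from a PI to the root $r_b$ of a logic block $b$ passes through a leaf of $b$'s representative cut, so $f_{r_b}^{OG}$ factors as $C_b\bigl(f_{\ell_1}^{OG},\dots,f_{\ell_j}^{OG}\bigr)$, where $\ell_1,\dots,\ell_j$ are the leaves and $C_b$ is the Boolean function realized by the covered network of $b$ (the induced sub-DAG of $OG$ on the nodes lying on paths from the leaves to $r_b$, which has $r_b$ as its unique greatest element in the sense of Definition \ref{def-INS}). On the $MG$ side, by construction the logic block at $n=r_b$ computes $f_{r_b}^{MG}=F_b\bigl(f_{\ell_1}^{MG},\dots,f_{\ell_j}^{MG}\bigr)$, where $F_b$ is the $k$-LUT function stored in that block. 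The hypothesis of the theorem is exactly that $F_b=C_b$ as Boolean functions for every block $b$.

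With these pieces the induction is routine. Base case: a PI computes the same input variable in both networks. Inductive step: take a used node $n=r_b$ and assume $f_\ell^{OG}=f_\ell^{MG}$ for all leaves $\ell$ of $b$ (each such $\ell$ is a PI or a used node strictly earlier in the topological order, so the hypothesis applies); then
\[
f_{r_b}^{MG}=F_b\bigl(f_{\ell_1}^{MG},\dots,f_{\ell_j}^{MG}\bigr)=C_b\bigl(f_{\ell_1}^{OG},\dots,f_{\ell_j}^{OG}\bigr)=f_{r_b}^{OG}.
\]
Since every PO is either a PI or a used node, $MG$ and $OG$ agree on all outputs, which is the claim. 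For sequential networks I would additionally note that register inputs/outputs are treated as pseudo-POs/PIs, so the combinational argument above applies across each register boundary and hence to the whole sequential network.

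I expect the main obstacle to be bookkeeping rather than mathematical depth: making the notion of ``covered network'' of a logic block precise when the mapping is actually performed on the choice network $CG$ rather than on $OG$ directly (one must argue that each representative cut and its hidden interior correspond to an honest sub-DAG whose function is well defined, using that a choice class only adds function-equivalent or function-complementary alternatives and that the representative node drives all the fanouts), and verifying that the ``covers the whole network'' guarantee of the derive-mapping step really forces the node set of $MG$ to be precisely the PIs plus block roots with the edge correspondence described above. Once that correspondence is nailed down, the equivalence is just compositionality of Boolean evaluation along the topological order.
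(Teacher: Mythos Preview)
Your proposal is correct and follows essentially the same route as the paper: both argue that the used nodes (the paper's \textit{GNS}) globally cover \textit{OG}, and that checking each block against its covered subnetwork and propagating agreement in topological order yields equivalence at the outputs. Your version is simply a more explicit and rigorous rendering of the paper's short sketch, spelling out the factorization through cut leaves and the induction step that the paper leaves implicit.
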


\begin{proof}
A mapping is a set of representative cuts for used nodes in the network, and the used nodes are the \textit{GNS} in \textit{OG}. And the corresponding nodes \textit{CNS} to \textit{GNS} in \textit{MG} can be easily obtained during deriving mapping. \textit{GNS} guarantees the global cover and its internal nodes are only connecting fanin cone and fanout cone, and have no logic function. Therefore, we can get the verification result of \textit{MG} and \textit{OG} by checking the equivalence based on \textit{GNS} and \textit{CNS} in topological order.
\end{proof}

\begin{definition}
\label{def-ICS}
(\textit{corresponding pair set}~(\textit{CPS})). For a node pair $(i,~j)$, if $i \in $\textit{GNS} in \textit{OG}, and  $j$ is its corresponding node in \textit{MG}, we call this node pair as corresponding pair, and \textit{CPS} is the set of corresponding pair.
\end{definition}

\vspace{3pt}
\begin{algorithm}
\small
\caption{Verification on Technology Mapping}
\label{algorithm:ec}
\begin{algorithmic}[1]
\Require  network \textit{OG} before mapping, 

mapping result network \textit{MG}
\Ensure Equivalent or not
\State initialize $error$ = false
\State collecting the \textit{CPS} for \textit{OG} and \textit{MG}
\State topologizing \textit{CPS}
\For{$np$ in \textit{CPS}}
    \If{$error$ == true}
        \State \Return !$error$
    \Else
        \State $error$ = $checking\_logic\_block$($np$, \textit{OG}, \textit{MG})
    \EndIf
\EndFor
\State \Return !$error$
\end{algorithmic}
\end{algorithm}

\subsubsection{Equivalence checking on combinational mapping}
\ \newline 
\indent As illustrated in Algorithm \ref{algorithm:ec}, after technology mapping, we obtain the mapping network \textit{MG} for its original network \textit{OG}. 
The variable $error$ indicates the presence of bugs in the current mapping network. 
First, we collect the injective corresponding node set \textit{CPS} which can expose the corresponding node-pairs between \textit{OG} and \textit{MG}. 
The processing order is important because it is meaningless to check for equivalence after node-pair $np$ if we find that its logic block is not equivalent to its covered network. 
After ordering the \textit{CPS} in a topological manner, we perform the equivalence checking incrementally. 
If we find that a node's logic block is not equivalent to its covered network in \textit{CPS}, the program will return ``false"; otherwise, the algorithm will continue checking until the last element in \textit{CPS} and return ``true". 
Thus, \textit{CPS} guarantees the global cover and the local block checking performs a fast verification.

\begin{example}
Fig. \ref{fig:map} (b) and (c) illustrate the cut enumeration and FPGA technology mapping steps for network \textit{OG}. After the multi-iterated cut selection, the best representative cut for each node is determined. Then, the mapping network is derived in reversed topological order, and the ordered node set is \{12,1,10,11,8,9,6,7,2,3,4,5,0\}. Based on this order vector, cut \{1,2,3,7\} is selected as the representative cut of node-12 first, and the node set \{1,10,11,8,9,6,2,3,7\} is covered, with the uncovered node set being \{7,4,5,0\}. Then, cut \{4,5\} is selected as the representative cut of node-7 and node-0 is directly connected to the output. Thus, all nodes in \textit{OG} are covered, and the mapping network \textit{MG} is derived.
As for verification on technology mapping, we first collect the injective corresponding node set \textit{CPS}, according the deriving mapping process, and which is \{\{12,7\}, \{7,6\}, \{0,0\}\}. Next, we sort \textit{CPS} in topological order, yielding \{\{0,0\},\{7,6\},\{12,7\}\}. After these steps, we can perform verification for each logic block one by one according to \textit{CPS}. Node-0 can be trivially verified because it is a constant node. If a bug is found in \textit{CPS} during the checking step, the process will stop and return ``false". Otherwise, it will return ``true" once all node-pairs in \textit{CPS} have been checked. 
\end{example}

\subsubsection{Equivalence checking on sequential mapping}
\ \newline 
\indent Sequential mapping is built upon the combinational mapping method for its combinational network between registers. It is straightforward to prove that equivalence can also be checked using equivalence checking on combinational mapping because the registers will not be altered.

\subsection{Equivalence Checking on Logic Block}
Algorithm \ref{algorithm:ec-block} shows the details for the equivalence checking on logic block. It is easy to add constraint of logic block for regular networks, but choice networks are different and the difference will be discussed in this part.
In order to distinguish it from \textit{OG}, the internal node is called block in \textit{MG}.
And we refer to the \textit{OG}'s node in $np$ as $fn$ and to the \textit{MG}'s node in $np$ as $sn$.

\begin{definition}
\label{def-regulary_blcok}
(\textit{regular block}). The \textit{regular block} is a block with one root node and its limited fanin cone, and each node in its cone must be representative node. 
\end{definition}

\begin{definition}
\label{def-choice_block}
(\textit{choice block}). The \textit{choice block} is a block with one root node and its limited fanin cone, and there are some choice nodes in its fanin cone. 
\end{definition}

\vspace{3pt}
\begin{algorithm}
\small
\caption{checking logic block}
\label{algorithm:ec-block}
\begin{algorithmic}[1]
\Require node-pair $np$, network \textit{OG} before mapping, 

         mapping result network \textit{MG},
\Ensure Equivalent or not
\State initialize $error$ = false, SAT solver $s$
\State $v_1$ $\leftarrow$ add\_constraint\_of\_logic\_block($np$, \textit{MG}, $s$)
\State $v_2$ $\leftarrow$ add\_constraint\_of\_covered\_network($np$, \textit{OG}, $s$)
\State add\_xor\_constraint($s$, $v_1$, $v_2$)
\If{$s$ is SAT}:
   \State $error$ = true
\Else
   \State $error$ = false
\EndIf
\State \Return !$error$
\end{algorithmic}
\end{algorithm}

\subsubsection{Checking for regular block}
\ \newline 
\indent As illustrated in Algorithm \ref{algorithm:ec-block}, checking equivalence for a regular block is straightforward. The variable $error$ indicates that the current logic block is not equivalent to its covered network. First, we add the constraint for the logic block of $sn$. The block's truth table can be directly obtained and transformed into Conjunctive Normal Form (CNF) using Boolean operators, so we add this CNF to the SAT solver $s$ through Tseitin transformation and return a temporary variable $v_1$ for this logic block. Next, we need to consider adding the constraint of the covered network of the logic block in \textit{OG}. 

Thanks to the fact that there are no choice classes in \textit{OG} for regular network, we can directly traverse from $fn$ to its block's leaves to collect the covered nodes. The traversed nodes are guaranteed to be the block's covered network due to the property of \textit{k}-feasible cuts. Hence, we can add constraints to the covered nodes using Tseitin transformation and return another temporary variable $v_2$ for root $sn$. Finally, we need to add the XOR constraint to $s$ between $v_1$ and $v_2$, and check the satisfiability of $s$. If $s$ is SAT, it means that some bugs have occurred here, otherwise, this logic block is deemed to be correct.

\begin{example}
As shown in Fig. \ref{fig:map}(c), node-6 of \textit{MG} is a regular block according to its covered node set \{4,5,7\} in \textit{OG}. And we can do logic block verification following Algorithm \ref{algorithm:ec-block}.
In \textit{MG}, we start by adding constraints for block-6. The truth table of block-6 can be directly obtained from the mapping network and is \{0001\}, and left for the truth table is low-bit. We need to assign the distinct sat ids to the nodes for $s$, and the node-4, node-5 and block-6 are assigned the ids {1,2,3}.  Then we can get the constraints ``$(\bar{1} + \bar{3})(\bar{2} + \bar{3})(1+2+3)$" of block-6's truth table \{0001\} using Tseitin transformation, and return the sat id of block-6. 
In \textit{OG}, we also need to assign the distinct sat ids to the covered nodes for $s$. node-4 and node-5 are the same in \textit{MG}, so their sat ids are retained, and node-7 is assigned the id {4}. We can then add the constraints ``$(\bar{1} + \bar{4})(\bar{2} + \bar{4})(1+2+4)$" for node-4, node-5 and node-7 using Tseitin transformation, and return the sat id of node-7. 
Finally, we add the xor constraint for the returned sat ids \{3,4\}, and then solve the $s$ to get the checked value ``true".
\end{example}

\subsubsection{Checking for choice block}
\ \newline 
\indent Choice networks differ from the regular networks on displayed choice class in the network. It is more difficult to add constraints to the covered network by traversing from $fn$ to the block's leaves in choice networks, as the choice node can break the property of $k$-feasible cuts, meaning that it may not be possible to reach the leaves from the root. However, the representative node saves the best cuts of the choice nodes in its choice class, allowing the fanin cone of the choice nodes can be visited by their original fanout cone. Additionally, thanks to the topological order, the covered nodes can be collected from the block's leaves to root $fn$, which may result in collecting redundant nodes, but the solver's result remains unaffected. The other steps are the same as in regular networks.

\begin{example}
As shown in Fig. 3(c), it is clear that block-7 of \textit{MG} is a choice block and its exact covered node set is \{12,1,11,6,7,2,3\} in \textit{OG}. 
In \textit{MG}, we add the constraints for block-7 first, and its truth table is \{0101,0101,0001,0101\}. We assign the distinct sat ids to nodes for $s$, and node-1, node-2, node-3, block-6 and block-7 are assigned the id \{1,2,3,4,5\}. Then we get the constraints of its truth table is ``$(\bar{1}+4+5)(\bar{1}+\bar{3}+5)(\bar{1}+\bar{2}+5)(2+3+\bar{4}+\bar{5})(1+\bar{5})$", and we return the sat id of block-7. 
In \textit{OG}, collecting the covered node set by traversing from root to leaf nodes can be costly when the choice class is large. We first collect the redundant candidate-covered nodes from leaf nodes to the root direction, and the candidate node set is \{1,2,3,7,6,11,8,9,12\}. Compared to the exact covered nodes above, \{8,9\} are redundant nodes, but this does not affect the checking result. The candidate node set \{7,6,11,8,9,12\} is assigned the sat ids \{6,7,8,9,10,11\}, and \{1,2,3\} retain their original sat ids. Then, we add the constraints of these candidate-covered nodes by Tseitin transformation until the root node-12, and return the sat id of node-12.
Finally, we add the xor constraint for the returned sat ids \{5,11\}, and then solve the $s$ to get the checked value ``true".
\end{example}

\section{MEC Framework}

In this section, we will briefly introduce the system component diagrams required for FPGA technology mapping and the parts related to MEC are the ones that we mainly focus on.

\begin{figure}[htbp]
    \centering
    \includegraphics[scale=0.305]{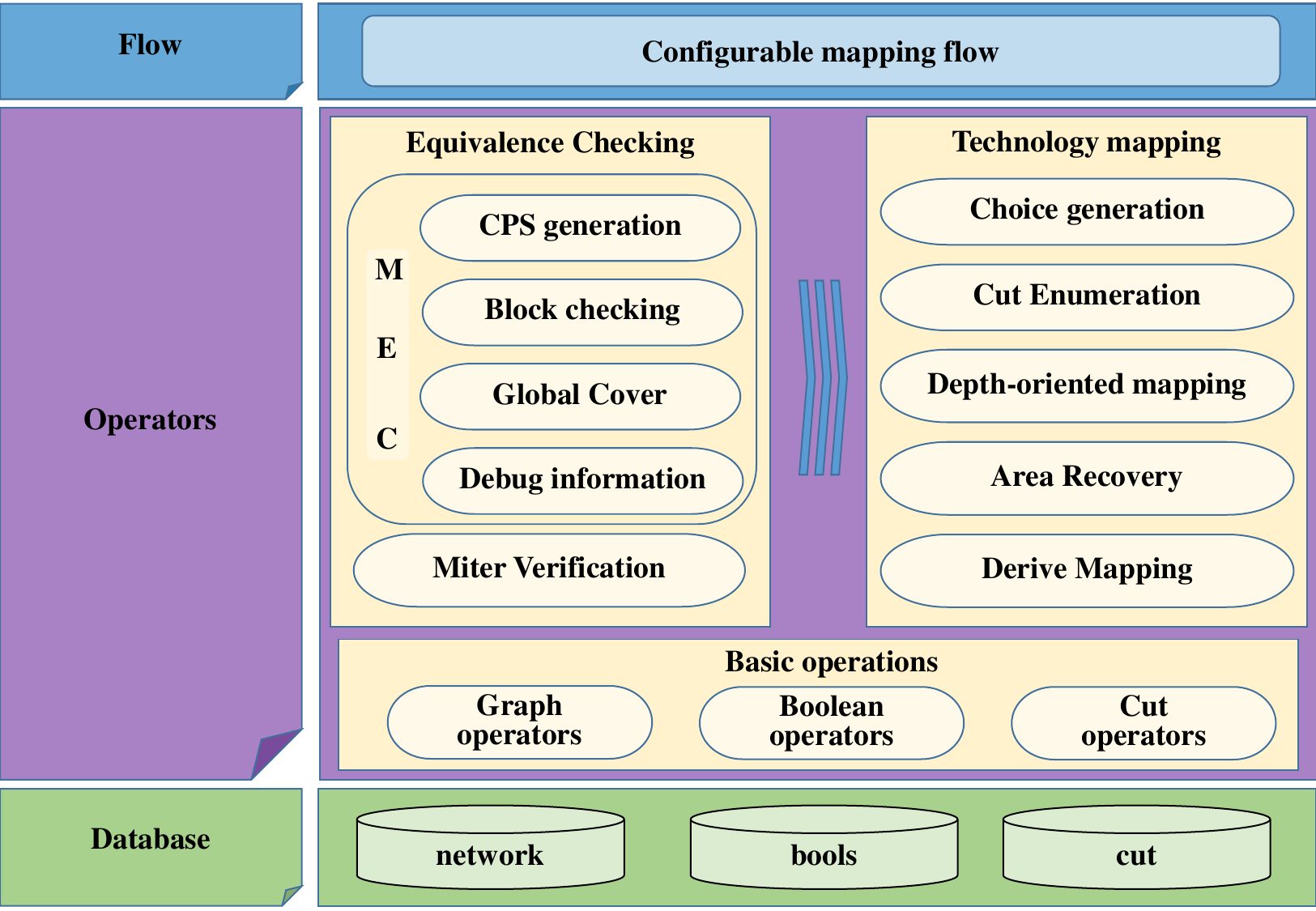}
    \caption{The framework of MEC.}
    \label{fig:framework}
\end{figure}

Fig. \ref{fig:framework} shows our system components for the technology mapping process. It consists of three layers: the database layer, operators layer and flow layer. The database layer is consisted of basic data structures, mainly about the networks, cut and Boolean logic. Therefore, it can support graph-based manipulation and cut-based operators. The bools mainly includes truth table, SOP and CNF. The upper layer is the operators layer. With the support of the basic database, we can implement the FPGA technology mapping and equivalence checking. The top layer is the flow layer. We can make the mapping flow configurable through this layer. So it is convenient for the user to do more exploration for the mapping flow.

We will focus on the technology mapping flow supported by MEC. 
The input is a network for a combinational circuit. First, We will generate the choice network by merging the selected internal optimized networks. Next, we perform cut enumeration and the following is the multi-iteration and multi-criteria cut sorting to update the representative cut for each node. In the middle of this process, we can directly call MEC to do verification by ``\textit{ICS generation}", ``\textit{Block checking}" and ``\textit{Global Cover}". If bugs are detected by MEC, the problematic part can also be extracted and saved in a DOT format file by MEC's ``\textit{Debug information}" for easier debugging. Finally, we can obtain the netlist by deriving the final mapping network. Therefore, MEC tool can support the equivalence checking at fine-grained level during technology mapping and is more efficient.

\section{Experiment}

\begin{table*}[htb]
 \renewcommand\tabcolsep{6.7pt} % Adjust the width 
\caption{Runtime comparison between \textit{MEC}, \textit{MTR}, \textit{FML} and \textit{CEC} tools.}
\label{tab:time}
\centering

\begin{tabular}{|c|c|c|c|cc|cc|c|}
    \hline\hline
    
    \multirow{2}{*}{Type}        & \multirow{2}{*}{Name} & \multirow{2}{*}{Size} & \multirow{2}{*}{Level} & \multicolumn{2}{c|}{Coarse-grained Time (sec)} & \multicolumn{2}{c|}{Fine-grained Time (sec)}      & {Speedup}  \\ \cline{5-9} 
                                 &                       &                       &                        & \multicolumn{1}{c|}{FML} & MTR                 & \multicolumn{1}{c|}{MEC} & CEC                   & CEC/MEC \\ 
    \hline\hline
    \multirow{10}{*}{arithmetic} 
                                 & adder                 & 1277                  & 255                    & \multicolumn{1}{c|}{3.5    }  & \multicolumn{1}{c|}{0.16}     & \multicolumn{1}{c|}{\textbf{0.02}}   & 0.77      & 38.92 \\ 
                                 & bar                   & 3472                  & 12                     & \multicolumn{1}{c|}{13.13  }  & \multicolumn{1}{c|}{6.73}     & \multicolumn{1}{c|}{\textbf{0.05}}   & 1.55      & 30.97 \\
                                 & div                   & 57376                 & 4372                   & \multicolumn{1}{c|}{109,128 } & \multicolumn{1}{c|}{timeout}  & \multicolumn{1}{c|}{\textbf{49.2}}   & 417.74    & 8.49  \\
                                 & log2                  & 32093                 & 32060                  & \multicolumn{1}{c|}{37,859.1} & \multicolumn{1}{c|}{timeout}  & \multicolumn{1}{c|}{\textbf{45.91}}  & 105.9     & 2.31  \\ 
                                 & max                   & 3378                  & 287                    & \multicolumn{1}{c|}{29.74  }  & \multicolumn{1}{c|}{0.51}     & \multicolumn{1}{c|}{\textbf{0.27}}   & 1.67      & 6.08  \\ 
                                 & multiplier            & 27191                 & 274                    & \multicolumn{1}{c|}{546.44 }  & \multicolumn{1}{c|}{timeout}  & \multicolumn{1}{c|}{\textbf{3.32}}   & 30.4      & 9.14  \\
                                 & sin                   & 5441                  & 225                    & \multicolumn{1}{c|}{167.17 }  & \multicolumn{1}{c|}{timeout}  & \multicolumn{1}{c|}{\textbf{0.7}}    & 11.39     & 16.33 \\
                                 & sqrt                  & 24747                 & 5058                   & \multicolumn{1}{c|}{timeout}  & \multicolumn{1}{c|}{timeout}  & \multicolumn{1}{c|}{\textbf{25.7}}   & 153.37    & 5.97  \\ 
                                 & square                & 18549                 & 250                    & \multicolumn{1}{c|}{180.33 }  & \multicolumn{1}{c|}{timeout}  & \multicolumn{1}{c|}{\textbf{0.58}}   & 51.61     & 88.09 \\ 
                                 & hyp                   & 214592                & 24801                  & \multicolumn{1}{c|}{timeout}  & \multicolumn{1}{c|}{timeout}  & \multicolumn{1}{c|}{\textbf{473.53}} & 11331.81  & 23.93 \\ \hline
    \multirow{10}{*}{random}    
                                 & priority              & 1107                  & 250                    & \multicolumn{1}{c|}{3.93  }    & \multicolumn{1}{c|}{0.099}    & \multicolumn{1}{c|}{\textbf{0.013}}  & 0.13      & 9.71  \\ 
                                 & cavlc                 & 704                   & 16                     & \multicolumn{1}{c|}{3.69  }    & \multicolumn{1}{c|}{0.022}    & \multicolumn{1}{c|}{\textbf{0.0057}} & 0.08      & 14.04 \\ 
                                 & arbiter               & 12096                 & 87                     & \multicolumn{1}{c|}{30.36 }    & \multicolumn{1}{c|}{12.48}    & \multicolumn{1}{c|}{\textbf{0.157}}  & 4.95      & 31.51 \\ 
                                 & i2c                   & 1490                  & 20                     & \multicolumn{1}{c|}{4.33  }    & \multicolumn{1}{c|}{0.053}    & \multicolumn{1}{c|}{\textbf{0.0115}} & 0.22      & 19.00 \\ 
                                 & ctrl                  & 182                   & 10                     & \multicolumn{1}{c|}{1.67  }    & \multicolumn{1}{c|}{0.0014}   & \multicolumn{1}{c|}{\textbf{0.0013}} & 0.01      & 7.86  \\ 
                                 & dec                   & 313                   & 3                      & \multicolumn{1}{c|}{1.52  }    & \multicolumn{1}{c|}{0.0066}   & \multicolumn{1}{c|}{\textbf{0.0024}} & 0.04      & 16.37 \\ 
                                 & int2float             & 272                   & 16                     & \multicolumn{1}{c|}{2.19  }    & \multicolumn{1}{c|}{0.0039}   & \multicolumn{1}{c|}{\textbf{0.002}}  & 0.02      & 10.10 \\ 
                                 & mem\_ctrl             & 48041                 & 114                    & \multicolumn{1}{c|}{451.79}    & \multicolumn{1}{c|}{12,487.1} & \multicolumn{1}{c|}{\textbf{4.52}}   & 43.09     & 9.53  \\ 
                                 & router                & 318                   & 54                     & \multicolumn{1}{c|}{1.8   }    & \multicolumn{1}{c|}{0.003}    & \multicolumn{1}{c|}{\textbf{0.0025}} & 0.01      & 4.02  \\ 
                                 & voter                 & 14760                 & 70                     & \multicolumn{1}{c|}{45.12 }    & \multicolumn{1}{c|}{200.13}   & \multicolumn{1}{c|}{\textbf{0.46}}   & 15.61     & 33.58 \\ \hline
    \multirow{1}{*}{Average}
                                 &{-}                    &{-}                    &{-}                     & -                              & -                             & \multicolumn{1}{c|}{\textbf{30.2}}   & 608       & 20.13 \\ \hline\hline

\end{tabular}
\end{table*}

\subsection{Setup}
We implement MEC tool in C++ based on the data structure of lsils\cite{epfl-tool}, and we used minisat\cite{minisat-tool} as the SAT solver. All procedures run on an Intel(R) Xeon(R) Platinum 8260 CPU with 2.40GHz, 24 cores and 128GB RAM. 
We perform experiments on EPFL benchmark\cite{benchmark}. This benchmark contains 10 arithmetic circuits and 10 control/random circuits with circuit sizes ranging from 0.1k to 210k.

\subsection{Results}
For convenience, \textit{MEC} refers to our proposed algorithm; 
\textit{MTR} refers to purely perform verification on the miter of the mapping network and its original network, then checking by SAT solver without any tricks;
\textit{FML} refers to directly perform verification on these two networks by the Synopsys Formality Q-2019.12-SP1;
\textit{CEC} refers to perform verification on these two networks by the cec command of abc, and we transform the gate-level netlist into aig file and the original as its inputs.
We set 100 hours as the timeout for this comparison.

As shown in Table \ref{tab:time}, it depicts that the coarse-grained methods \textit{FML} and \textit{MTR}'s performance are poor, there are even some timeout cases of arithmetic circuits. However, it also shows that the fine-grained methods \textit{MEC} and \textit{CEC} can also solve all the circuits within timeout constraints. 
\textit{MEC} achieves the best average time result and can solve most cases within 1 second; \textit{CEC} also performs good due to its exploration of the potential function-equivalent node-pairs to reduce the solver space.

% However, there are still some timeout cases of arithmetic circuits for \textit{MTR} and \textit{FML}, all of which have over 5k gates.
% Therefore, it is hard to solve the EC problem for \textit{MTR} without any learned information before solving by SAT.

% We also compare the speed-up for MEC compared to \textit{MTR} and \textit{FML}. 
% MEC can be up to 614,463x faster than \textit{MTR} and the average speed-up is 63,614x. 
% MEC can also be up to 14,009x faster than \textit{FML}, with an average speed-up of 1,227x. 
We also compare the speed-up between \textit{MEC} and \textit{CEC}. 
MEC can be up to 88x faster than \textit{CEC} and the average speed-up is 20x. 
These results show that MEC is very efficient for verification on technology mapping, which would greatly improve the efficiency of verification.

\section{Conclusion}
In this paper, we proposed an efficient algorithm for verification on technology mapping, and we release an open-source FPGA mapping verification tool, MEC.
The experimental results show that MEC can solve all the cases and get most results within seconds, outperforming the miter-based method and even the commercial formality tool. In the future, our focus will be directed towards the following areas: (1) Accelerating MEC through parallelization. Our verification algorithm for technology mapping is based on local logic blocks, which has a lot of room for parallelization; (2) Our proposed algorithm has the potential for performing verification between the network in technology mapping and place-and-route (PnR) steps, as long as the behavior of the registers remains unchanged.
% \begin{itemize}
%     \item Accelerating MEC through parallelization. Our verification algorithm for technology mapping is based on local logic blocks, which has a lot of room for parallelization.
%     \item Our proposed algorithm has the potential for performing verification between the network in technology mapping and place-and-route (PnR) steps, as long as the behavior of the registers remains unchanged.
% \end{itemize}

\section*{Acknowledgment}
This work is supported in part by the Major Key Project of PCL (No. PCL2021A08), the National Natural Science Foundation of China (No. 62090024), the National Natural Science Foundation of China (No. 62090021).

\bibliographystyle{IEEEtran}

\bibliography{formal.bib}  %open it when we using some reference

\end{CJK*}
\end{document}